\journal{Journal of \LaTeX\ Templates}
\newtheorem{theo}{Theorem}[section]
\newtheorem{lema}{Lemma}[section]
\newcommand{\dx}{\mathrm d^4x}
\newcommand{\FF}{\mathcal{F}}
\newcommand{\Natural}{\mathbb{N}}
\newcommand{\ve}{\varepsilon}
\begin{document}

\begin{frontmatter}

\title{Cosmological solutions in modified gravity with monomial nonlocality}
%\tnotetext[mytitlenote]{Fully documented templates are available in the elsarticle package on \href{http://www.ctan.org/tex-archive/macros/latex/contrib/elsarticle}{CTAN}.}

%% or include affiliations in footnotes:
\author[mymainaddress]{Ivan Dimitrijevic}
\ead{ivand@math.rs}

\address[mymainaddress]{Studentski trg 16, 11000 Belgrade Serbia}

\begin{abstract}
We consider cosmological properties of modified gravity with nonlocal term $ R^p\FF(\Box)R^q$ in its Lagrangian.
Equations of motion are presented. For the flat FLRW metric, and some particular values of natural numbers $p$ and $q$ cosmological solutions of the form $a(t)= C e^{- \frac \gamma{12}t^2}$ are found.
\end{abstract}

\begin{keyword}
modified gravity\sep nonlocal gravity \sep cosmological solutions% \sep template
\MSC[2010]  83D05 \sep 83F05 \sep 53B50
\end{keyword}

\end{frontmatter}

\linenumbers

\section{Introduction}

Modern theory of gravity is general theory of relativity (GR), which was founded by Einstein one hundred years ago and has been successfully confirmed for the Solar System. It is given by the Einstein equations
of motion for gravitational field  $g_{\mu\nu}$:
$%\begin{equation}
R_{\mu\nu} - \frac{1}{2} R g_{\mu\nu} = 8 \pi G T_{\mu\nu},
$%\end{equation}
which can be derived from the Einstein-Hilbert action $S =
\frac{1}{16 \pi G} \int R \sqrt{-g} \dx + \int \mathcal{L}_{mat} \sqrt{-g} \dx$,  where $g = det (g_{\mu\nu})$ and units are chosen in such way that $c = 1$.

Despite all its successes, GR is not a final theory of gravity. There are  many its modifications,
which are motivated by quantum gravity, string theory, astrophysics and cosmology (for a  review, see \cite{clifton}).
One of very promising directions of research is {\it nonlocal modified gravity} and its applications to cosmology (as a review, see
 \cite{odintsov} and \cite{Woodard:2014iga}). To solve cosmological Big Bang singularity, nonlocal gravity with replacement $R \to
 R + C R \mathcal{F}(\Box)R$  in the Einstein-Hilbert action was proposed in \cite{biswas0}. This nonlocal model is further elaborated
 is the series of papers \cite{biswas,biswas:2011ar,koshelev2,DDGR1,DDGR2,biswas:2013cha}.

In this  paper we consider the  action
\begin{equation} \label{action}
  S_{pq} = \int \left( \frac {R-2\Lambda}{16\pi G} + R^p\FF\left(\frac\Box{M^2}\right)R^q \right) \sqrt{-g} \dx
\end{equation}
 where $R$ is scalar curvature, $ \mathcal{F}(\Box)= \displaystyle \sum_{n =0}^{\infty} f_{n}\Box^{n}$  is an analytic function  of the d'Alembert-Beltrami operator $\Box = \frac{1}{\sqrt{-g}} \partial_{\mu}
\sqrt{-g} g^{\mu\nu} \partial_{\nu},$ $\, g = det(g_{\mu\nu})$, $M$ is a characteristic scale and $p$ and $q$ are natural numbers. For simplicity we take $M=1$. At the end of this paper we briefly discuss the limit when $M\to +\infty$.   
 %Its  nonlocal term $R^{-1} \FF(\Box) R + R \FF(\Box) R^{-1}$ contains $f_0$ which can be connected with the cosmological constant as $f_0 = - \frac{\Lambda}{16 \pi G}.$ This term is also invariant under transformation $R \to C R ,$ where $C$ is a constant, i.e. this nonlocality does not depend on  magnitude of the scalar curvature $R \neq 0.$ Moreover, this term is manifestly invariant under transformation $R \to R^{-1}$,
%Our intention is to present some cosmological solutions in this paper as a part of a systematic investigation of nonlocal gravity \eqref{eq-1.2}.
In the paper \cite{DDGR6} action \eqref{action} was introduced and  constant scalar curvature cosmological solutions were obtained. Also, perturbations around de Sitter background were discussed in \cite{DDGKR}.

%Note that there have  been some investigations with $1/R$ modification of gravity, but they are not nonlocal and they have problems to be confirmed for the Solar System \cite{Erickcek:2006vf}.
%Let us mention that there are some other approaches to nonlocal gravity which contain $\Box^{-1}$ instead of $\Box$, see, e.g. \cite{Deser:2007jk}.%,woodard:2014iga,Nojiri:2007uq,Elizalde:2012ik
% Nonlocality also improves renormalizability of gravity, see \cite{Briscese:2012ys,Calcagni:2013vra} and references therein.

\section{Equations of motion}
Variation of the action \eqref{action} with respect to metric yields the equations of motion in the form
\begin{equation} \label{eom}
  -\frac 12 g_{\mu\nu} R^p\FF(\Box)R^q + R_{\mu\nu} W_{pq} - K_{\mu\nu} W_{pq} + \frac 12 \Omega_{pq\,\mu\nu} = - \frac {G_{\mu\nu}+ \Lambda g_{\mu\nu}}{16\pi G},
\end{equation}
where
\begin{equation}\begin{aligned}
  W_{pq} &= p R^{p-1}\FF(\Box)R^q + q R^{q-1} \FF(\Box)R^p, \\
   K_{\mu\nu} &= \nabla_\mu \nabla_\nu - g_{\mu\nu}\Box, \\
  \Omega_{pq\,\mu\nu} &= \sum_{n=1}^\infty f_n \sum_{l=0}^{n-1} \Big(g_{\mu\nu} \nabla^\lambda \Box^l R^p \nabla_\lambda \Box^{n-1-l} R^q + g_{\mu\nu}  \Box^l R^p \Box^{n-l}R^q \\
  &-2  \nabla_\mu\Box^l R^p \nabla_\nu \Box^{n-1-l}R^q \Big), \\
\end{aligned}\end{equation}

Detailed derivation of the above equations can be found in \cite{DDGR6}.

In this paper Friedmann-Lema\^{\i}tre-Robertson-Walker (FLRW) metric
$ds^2 = - dt^2 + a^2(t)\big(dr^2 + r^2 d\theta^2 + r^2 \sin^2 \theta d\phi^2\big)$ is used. The signature of a metric is $(1,3)$ and the sign of a curvature tensor is chosen such that 
\begin{align}
  R_{\mu\nu\alpha}^\beta &= \partial_\nu \Gamma_{\mu\alpha}^\beta - \partial_\mu \Gamma_{\nu\alpha}^\beta + \Gamma_{\mu\alpha}^\lambda \Gamma_{\nu\lambda}^\beta - \Gamma_{\nu\alpha}^\lambda \Gamma_{\mu\lambda}^\beta,\\
  R_{\mu\nu} &= R_{\mu\lambda\nu}^\lambda.
\end{align}
Scalar curvature is $R = R_{\mu\nu} g^{\mu\nu} = 6 \left (\frac{\ddot{a}}{a} + \frac{\dot{a}^{2}}{a^{2}} \right )$ and $\Box
h(t)= - \partial_t^2 h(t) - 3 H \partial_t h(t) ,$ where $H = \frac{\dot{a}}{a}$ is the Hubble parameter.

\begin{lema}
  If the metric is chosen to be FLRW, then the system \eqref{eom} has two linearly independent equations.
\end{lema}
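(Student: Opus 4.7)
The plan is to exploit the maximal spatial symmetry of the FLRW ansatz: homogeneity plus isotropy forces any covariantly constructed rank-two symmetric tensor on an FLRW background to take the form $A(t)\,u_\mu u_\nu + B(t)\,g_{\mu\nu}$, where $u_\mu=\delta_\mu^{0}$ is the comoving four-velocity. Such a tensor has only two algebraically independent entries: the $tt$-component and the common value of $(ii)/g_{ii}$ for $i=1,2,3$. My goal is therefore to verify that every term appearing in \eqref{eom} fits this template. Once this is done, off-diagonal components vanish identically, the three spatial diagonal components coincide, and exactly two scalar equations remain — the $00$-equation and any one of the $ii$-equations.

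The first observation is that on FLRW the scalar curvature satisfies $R=R(t)$, and $\Box$ maps $t$-dependent scalars to $t$-dependent scalars via $\Box h(t)=-\ddot h-3H\dot h$. Hence $R^{p}$, $R^{q}$, $\Box^{n}R^{p}$, $\Box^{n}R^{q}$, $\FF(\Box)R^{p}$, $\FF(\Box)R^{q}$, the scalar $R^{p}\FF(\Box)R^{q}$, and the combination $W_{pq}$ are all purely time-dependent. For any such $h=h(t)$, the Hessian decomposes as $\nabla_\mu\nabla_\nu h=\ddot h\,\delta_\mu^{0}\delta_\nu^{0}-\Gamma^{0}_{\mu\nu}\dot h$, and the only nonvanishing Christoffels contributing here are $\Gamma^{0}_{ij}\propto g^{(3)}_{ij}$; thus $\nabla_\mu\nabla_\nu h$ is diagonal with an isotropic spatial part, and therefore so is $K_{\mu\nu}h=(\nabla_\mu\nabla_\nu-g_{\mu\nu}\Box)h$. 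Applying this to $h=W_{pq}$ takes care of the $K_{\mu\nu}W_{pq}$ term. The geometric tensors $g_{\mu\nu}$, $R_{\mu\nu}$, and $G_{\mu\nu}$ are manifestly of the required FLRW form, and the coefficient $R^{p}\FF(\Box)R^{q}$ in front of $g_{\mu\nu}$ is a scalar, so those contributions are also fine.

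The only somewhat intricate term is $\Omega_{pq\,\mu\nu}$. Its two $g_{\mu\nu}$-prefactored summands are automatically of the form $B(t)g_{\mu\nu}$ because their contents are scalar contractions of $t$-dependent quantities. For the remaining $-2\,\nabla_\mu\Box^{l}R^{p}\,\nabla_\nu\Box^{n-1-l}R^{q}$ piece, each gradient collapses to $\delta_\mu^{0}\partial_t(\Box^{l}R^{p})$ since $\Box^{l}R^{p}$ depends on $t$ alone, so the product equals $\delta_\mu^{0}\delta_\nu^{0}$ times a $t$-dependent scalar, contributing purely to the $A(t)u_\mu u_\nu$ sector. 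Summing over $l$ and $n$ preserves this structure.

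The main obstacle I anticipate is purely organizational — keeping track of all the nested sums and tensor slots inside $\Omega_{pq\,\mu\nu}$ — rather than computational, since no individual summand needs to be evaluated. Once each piece of \eqref{eom} is confirmed to be of FLRW type, the reduction is immediate: the ten components of the symmetric tensor equation collapse to the two scalar equations obtained from $\mu=\nu=0$ and $\mu=\nu=i$ (for any $i$), and these two equations are linearly independent because they involve inequivalent combinations of $H$ and $\dot H$ (schematically, an $H^{2}$-type Friedmann equation and an $\ddot a/a$-type acceleration equation).
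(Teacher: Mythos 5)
Your argument is correct and follows essentially the same route as the paper: you verify term by term that $K_{\mu\nu}W_{pq}$ and $\Omega_{pq\,\mu\nu}$ vanish off the diagonal and that the spatial diagonal components are proportional to $g_{ii}$, which is exactly what the paper does, merely packaged here as the statement that every term has the isotropic form $A(t)u_\mu u_\nu + B(t)g_{\mu\nu}$. The only cosmetic difference is that the paper writes out the common $ii$ equation explicitly, while your closing remark that the two surviving equations are independent is as schematic as the paper's own assertion, so nothing is lost.
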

\begin{proof}
Note, that for the functions that only depend on time we have $K_{\mu\nu}=0$ when $\mu\neq \nu$. Also $\Omega_{pq \,\mu\nu} =0$ for $\mu\neq \nu$.It means that system \eqref{eom} has four nontrivial equations. Equations with indices  $\mu\nu$ equal to $11$, $22$ and $33$ can be rewritten as
\begin{equation} \begin{aligned}
  &16\pi G g_{ii} \Big(-\frac 12  R^p \FF(\Box)R^q + \left(\frac{\ddot a}{a}+ 2\left(\frac{\dot a}{a}\right)^2\right) W_{pq} - \left( \ddot W_{pq} + 2\frac{\dot a}a \dot W_{pq} \right) \\
  &+ \frac 12 \sum_{n=1}^\infty f_n \sum_{l=0}^{n-1} \left(\nabla^\lambda \Box^l R^p \nabla_\lambda \Box^{n-1-l} R^q + \Box^l R^p \Box^{n-l}R^q\right) \Big) = g_{ii}\left (\frac{2\ddot{a}a+\dot a^2}{a} - \Lambda\right ).
\end{aligned} \end{equation}

These equations are clearly proportional to each other and thus we have altogether two  independent equations. The most convenient choice is to use trace and $00$ equations, which are respectively
\begin{align}\label{trace}
  -2 R^p\FF(\Box)R^q + R W_{pq} + 3\Box W_{pq} + \frac 12 \Omega_{pq} &=  \frac {R - 4\Lambda}{16\pi G}, \\
  \frac 12  R^p\FF(\Box)R^q + R_{00} W_{pq} - K_{00} W_{pq} + \frac 12 \Omega_{pq\,00} &= - \frac {G_{00} - \Lambda}{16\pi G}, \label{eom:00} \\
  \Omega_{pq} &= g^{\mu\nu}\Omega_{pq\, \mu\nu}.
\end{align}

\end{proof}

At first, we investigate how does equations \eqref{eom} change when parameters $p$ and $q$ replace their places in the action \eqref{action}. To this end, the following lemma holds.
\begin{lema}\label{lem:symetric}
If we consider actions $S_{pq}$, given in \eqref{action}, and $S_{qp}$. The corresponding equations of motion are equivalent.
\end{lema}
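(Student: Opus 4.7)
The plan is to show that $S_{pq}$ and $S_{qp}$ coincide up to a total divergence, which immediately forces identical bulk Euler--Lagrange equations.

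The only place the two actions can differ is the nonlocal piece $R^p\FF(\Box)R^q$ versus $R^q\FF(\Box)R^p$; the Einstein--Hilbert and cosmological parts are manifestly $(p,q)$-independent. The first step is to establish self-adjointness of the d'Alembert--Beltrami operator with respect to the invariant measure $\sqrt{-g}\,\dx$. Using
$$\sqrt{-g}\,\Box h = \partial_\mu\!\left(\sqrt{-g}\,g^{\mu\nu}\partial_\nu h\right)$$
and integrating by parts twice, one obtains, modulo boundary terms,
$$\int f\,\Box g\,\sqrt{-g}\,\dx \;=\; \int (\Box f)\,g\,\sqrt{-g}\,\dx$$
for any two scalar fields $f$, $g$. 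Iterating gives self-adjointness of $\Box^n$ for every $n\in\Natural$, and linearity then yields the same property for the whole analytic function $\FF(\Box) = \sum_{n\ge 0} f_n \Box^n$.

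Applying this identity with $f = R^p$ and $g = R^q$ gives $\int R^p\FF(\Box)R^q\sqrt{-g}\,\dx = \int R^q\FF(\Box)R^p\sqrt{-g}\,\dx$ up to a boundary contribution. Hence $S_{pq} - S_{qp}$ is a total divergence, and since the variation of a total divergence does not affect the bulk equations of motion, the systems produced by $S_{pq}$ and $S_{qp}$ coincide.

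The main (and essentially only) obstacle is the usual one for nonlocal Lagrangians: controlling boundary terms so that the formal integration by parts is legitimate. This is already built into the variational derivation of \eqref{eom} quoted from \cite{DDGR6}, so no new assumption is required. A less elegant alternative would be to verify the symmetry directly on \eqref{eom}: $W_{pq}$ is manifestly invariant under $p\leftrightarrow q$, while $\Omega_{pq\,\mu\nu}$ can be brought to $\Omega_{qp\,\mu\nu}$ by relabeling the inner summation indices ($l\to n-1-l$ in the first and third terms, $l\to n-l$ in the second) together with the same self-adjointness identity applied term by term. The action-level argument above simply packages that bookkeeping into one clean line.
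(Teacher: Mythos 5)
Your proof is correct, but it takes a genuinely different route from the paper's. You work at the level of the action: self-adjointness of $\Box$ (hence of each $\Box^n$ and of $\FF(\Box)$) with respect to the measure $\sqrt{-g}\,\dx$ gives $S_{pq}-S_{qp}=$ boundary terms, so the bulk Euler--Lagrange equations must coincide. The paper instead verifies the symmetry directly on the displayed equations of motion \eqref{eom}: it subtracts the two systems, uses $W_{pq}=W_{qp}$ to cancel the $R_{\mu\nu}$ and $K_{\mu\nu}$ terms, notes that the $l=0$ part of $\Omega_{pq\,\mu\nu}-\Omega_{qp\,\mu\nu}$ exactly reproduces the explicit $-\tfrac12 g_{\mu\nu}\left(R^p\FF(\Box)R^q-R^q\FF(\Box)R^p\right)$ difference, and disposes of the remaining inner sum by the relabeling $l\to n-l$ --- essentially the ``less elegant alternative'' you sketch in your last paragraph. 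Your action-level argument is shorter, metric-independent, and makes the underlying reason for the symmetry transparent; what it buys at a price is that it presupposes \eqref{eom} is exactly the bulk variation of \eqref{action} with boundary terms consistently discarded, whereas the paper's term-by-term computation checks the identity concretely (and, notably, pointwise rather than under an integral) on the expressions actually used later. Both arguments ultimately rest on the same assumption, stated explicitly at the end of the paper's proof, that total-derivative terms arising from partial integration vanish, so your proposal is an acceptable and arguably cleaner substitute.
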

\begin{proof}
  Equations of motion, given by equation \eqref{eom}, for actions $S_{pq}$ and $S_{qp}$ read
\begin{align}
  -\frac 12 g_{\mu\nu} R^p\FF(\Box)R^q + R_{\mu\nu} W_{pq} - K_{\mu\nu} W_{pq} + \frac 12 \Omega_{pq\,\mu\nu} &= - \frac {G_{\mu\nu}+ \Lambda g_{\mu\nu}}{16\pi G}, \\
  -\frac 12 g_{\mu\nu} R^q\FF(\Box)R^p + R_{\mu\nu} W_{qp} - K_{\mu\nu} W_{qp} + \frac 12 \Omega_{qp\,\mu\nu} &= - \frac {G_{\mu\nu}+ \Lambda g_{\mu\nu}}{16\pi G}.
\end{align}

Since $W_{pq}$ and $W_{qp}$ coincide, subtraction of the last two equation yields
\begin{align}
    (R^p\FF(\Box)R^q- R^q\FF(\Box)R^p) =  \sum_{n=1}^\infty f_n \sum_{l=0}^{n-1} \left(\Box^l R^p \Box^{n-l}R^q - \Box^l R^q \Box^{n-l} R^p\right).
\end{align}
The terms obtained for $l=0$ on the right hand side of the last equation exactly match the left hand side, thus we are left with
\begin{align}
    \sum_{n=1}^\infty f_n \sum_{l=1}^{n-1} \left(\Box^l R^p \Box^{n-l}R^q - \Box^l R^q \Box^{n-l} R^p\right) =0.
\end{align}
This equation can be proved by changing the summation index $l\to n-l$ in one of the terms, which completes the proof. It is worth noting that we assume that total derivatives terms arising from the partial integration vanish.
\end{proof}
\section{The scale factor}
In this paper we consider the scale factor in the form
\begin{equation} \label{scale_factor}
  a(t) = C e^{- \frac \gamma{12}t^2}.
\end{equation}
This form of scale factor has been introduced, as a solution of the $p=q=1$ case in the paper \cite{koshelev2}. As a solution of $R^2$ gravity the scale factor \eqref{scale_factor} has been introduced in \cite{ruzmaikin} and further studied in papers \cite{starobinsky1,starobinsky2}. The correspondence between the $R^2$ gravity model and nonlocal model for $p=q=1$ has been discussed in \cite{Koshelev:2014voa}.  The present paper generalizes this result to various values of parameters $p$ and $q$ in  action $S_{pq}$. It is worth noting that $\gamma =0$ gives the Minkowski spacetime, and it is a solution of equations of motion \eqref{eom} for $\Lambda =0$. The following analysis does not depend on the sign of $\gamma$ and gives expanding ($\gamma <0$) and contracting ($\gamma>0$) models.

The Hubble parameter and scalar curvature are linear and quadratic functions in cosmic time $t$, respectively
\begin{equation} \label{HR}
  H(t) = -\frac 16 \gamma t, \qquad R(t) = \frac 13\gamma(\gamma t^2 -3).
\end{equation}
By direct calculation one can show that for any natural number $p$, $\Box R^p$ is a linear combination of $R^p$, $R^{p-1}$ and $R^{p-2}$, i.e.
\begin{equation}
  \Box R^p = p \gamma R^p - \frac p3(4p-5)\gamma^2 R^{p-1} - \frac 43 p(p-1)\gamma^3 R^{p-2}. \label{ansatz}
\end{equation}

\begin{lema} \label{lemma:matrix}
  For fixed value of parameter $\gamma$, an therefore fixed values of Hubble parameter $H$ and scalar curvature $R$,
  consider the space $P_p(R)$ of all polynomials of degree at most $p$ in $R$ and its base $v_p = \left(
           \begin{array}{ccccc}
             R^p & R^{p-1} & \ldots & R & 1 \\
           \end{array}
         \right)^T$. Operator $\Box$ is a linear operator on $P_p(R)$. The matrix of the operator $\Box$ in the basis $v_p$ is
\begin{align}
  M_p &= \small{\gamma}\left(
           \begin{array}{cccccc}
             p  & \frac p3(5-4p)\gamma & \frac 43p(1-p)\gamma^2 & 0 & \ldots & 0 \\
             0 & p-1 & \frac {p-1}3(9-4p)\gamma & \frac 43(1-p)(p-2)\gamma^2 & \ldots & 0 \\
             \vdots & \vdots & \vdots & \ddots & \vdots \\
             0 & 0 & 0 & \ldots & 1 & \frac{\gamma}3 \\
             0 & 0 & 0 & \ldots & 0 & 0 \\
           \end{array}
         \right).
\end{align}
\end{lema}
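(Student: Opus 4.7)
The lemma is a direct bookkeeping consequence of the reduction formula \eqref{ansatz}, so I would proceed in three short steps.

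First, I would verify that $P_p(R)$ is $\Box$-invariant and that the restriction is linear. Linearity of $\Box$ on functions of $t$ alone is immediate from $\Box h(t) = -\ddot h(t) - 3 H \dot h(t)$. Applying \eqref{ansatz} to $R^k$ for each $0 \le k \le p$ shows that $\Box R^k$ is a linear combination of $R^k$, $R^{k-1}$, $R^{k-2}$; the formally problematic terms $R^{-1}$ and $R^{-2}$ never actually appear, because the respective coefficients contain factors $k$ and $k(k-1)$ that vanish for $k = 0$ and $k \le 1$. Hence $\Box R^k \in P_k(R) \subseteq P_p(R)$.

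Second, I would read off the matrix. With the basis $v_p = (R^p, R^{p-1}, \ldots, R, 1)^T$ the $i$-th basis vector is $R^{p-i+1}$ for $1 \le i \le p$ and is the constant $1$ for $i = p+1$. Setting $k = p - i + 1$ and rewriting \eqref{ansatz} as
\begin{equation*}
\Box R^k = k \gamma \, R^k + \tfrac{k}{3}(5 - 4k)\gamma^2 \, R^{k-1} + \tfrac{4}{3}k(1-k)\gamma^3 \, R^{k-2},
\end{equation*}
I see that $\Box R^k$ contributes only to the entries $(i,i)$, $(i,i+1)$ and $(i,i+2)$ of $M_p$, with coefficients matching exactly those displayed. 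The last row of $M_p$ is identically zero because $\Box 1 = 0$.

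Third, as a sanity check I would inspect the boundary row at $i = p$ (that is, $k = 1$): the $(i,i+2)$ entry vanishes since $1-k = 0$, while the $(i,i+1)$ entry reduces to $\gamma^2/3$, which after pulling out the overall factor $\gamma$ produces precisely the tail $(\ldots,\,1,\,\gamma/3)$ in the statement. There is no serious obstacle in this proof: the whole argument is substitution into \eqref{ansatz} together with coefficient matching. The only point where care is required is keeping the index convention straight, since the basis $v_p$ orders monomials in \emph{decreasing} degree, so that the row index $i$ corresponds to the power $k = p - i + 1$.
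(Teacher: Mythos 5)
Your proof is correct and follows essentially the same route as the paper's: both establish linearity from $\Box h = -\ddot h - 3H\dot h$, use equation \eqref{ansatz} to show $\Box R^k \in P_p(R)$ for $0 \le k \le p$ (with the low-degree cases handled by the vanishing factors $k$ and $k(k-1)$), and then read the matrix entries directly off \eqref{ansatz}. Your version is in fact slightly more careful than the paper's, since you make the row-index convention $k = p-i+1$ explicit and verify the boundary rows against the displayed matrix.
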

\begin{proof}
  Since $\Box h(t)= - \partial_t^2 h(t) - 3 H \partial_t h(t)$ it is clear that $\Box$ is a linear operator. It remains to prove that $\Box R^s \in P_p(R)$ for all $0\leq s \leq p$. For $s=0$, we have $\Box 1 = 0 \in P_p(R)$. For $s=1$ equation \eqref{ansatz} becomes $\Box R = \gamma R + \frac{\gamma^2}3$ which is a linear polynomial in $R$ and therefore an element of $P_p(R)$. For $2\leq s \leq p$ equation \eqref{ansatz} gives us that $\Box R^s$ is a polynomial of degree $s$ in $R$ and hence element of $P_p(R)$.
  Again, from equation \eqref{ansatz} we obtain that the matrix $M_p$ has the form given in lemma.
\end{proof}

As a consequence of the lemma \ref{lemma:matrix} $\Box^n R^p$ is expressible as a polynomial in $R$ of degree $p$. Let $F_p$ be the matrix of the operator $\FF(\Box)$,
\begin{align}
  F_p &= \sum_{n=0}^\infty f_n M_p^n = \FF(M_p).
\end{align}

\section{The general case}

Lemma \ref{lem:symetric} allow us to assume that $p\geq q$ in the following sections. Also, it is worth noting that $W_{pq}$ is a polynomial of degree $p+q-1$ in $R$, as well as
\begin{align}
  \FF(\Box) R^p &= e_p F_p v_p, \\
  W_{pq} &= p R^{p-1} e_q F_q v_q + q R^{q-1} e_p F_p v_p, \\
  \Box W_{pq} &= -\frac 43 \gamma^2 (R+\gamma)W_{pq}'' -\frac 23 \gamma^2 W_{pq}',\label{box}\\
  K_{00} W_{pq} &= \gamma(R+\gamma) W_{pq}',
\end{align}
and $'$ denotes derivation {\it{wrt}} $R$. The $\Omega_{pq}$ and $\Omega_{pq\, 00}$ terms are also polynomials in $R$ with degrees $p+q-1$ and $p+q$ respectively
\begin{align}
  \Omega_{pq} &= -2S_1 + 4S_2,\\
  \Omega_{pq\ 00} &= -S_1 - S_2,\\
  S_1 &= \frac 43 \gamma^2 (R+\gamma) \sum_{n=1}^\infty f_n \sum_{l=0}^{n-1} e_p M_p^l D_p v_p e_q M_q^{n-1-l} D_q v_q, \\
  S_2 &= \sum_{n=1}^\infty f_n \sum_{l=0}^{n-1} e_p M_p^l v_p e_q M_q^{n-l} v_q.
\end{align}

Hence, equations \eqref{trace} and \eqref{eom:00} are polynomial type in $R$ and their degree is $p+q$. Let us look at the highest order coefficient. If $p\neq q$ we obtain the following two equations
\begin{align}
p \FF(q\gamma)(q-p+2)+q \FF(p\gamma)(q-p-2) &=0,\\
(-q-\frac{1}{2}p(q-p))\FF(q\gamma)+(-\frac{1}{2}q(q-p)+q) \FF(p\gamma) &=0.
\end{align}
These equations are linearly dependent for any values of parameters $p \neq q$.

In the other case $p=q$ the previous system becomes
\begin{align}
(p-1) \FF(p\gamma)+ p\gamma \FF'(p\gamma) &=0,\\
-\frac 12(p-1) \FF(p\gamma)-\frac 12 p\gamma \FF'(p\gamma) &=0.
\end{align}
We conclude that, this case also yields linearly dependent equations.

\section{Particular cases}

\subsection{Case $p=1, q=1$}
At the beginning we discuss the simplest case $p=q=1$. The solution of the form $a(t) = C \exp( \Lambda t^2)$ was already obtained by Koshelev and Vernov in \cite{koshelev2}.

\begin{theo}
  If the scale factor has the form $a(t)=C e^{-\frac \gamma{12}t^2}$ and $p=q=1$, then the system \eqref{trace}, $\eqref{eom:00}$ is satisfied iff $\gamma = -12\Lambda$, $\FF'(\gamma) = 0$ and $f_0 = \frac {3\kappa}{2\gamma} - 8 \FF(\gamma)$ where $\kappa = \frac 1{16\pi G}$.
\end{theo}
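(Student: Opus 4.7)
The plan is to specialize the trace equation \eqref{trace} and the $00$-equation \eqref{eom:00} to $p=q=1$ under the ansatz \eqref{scale_factor}, reduce both sides to polynomials in $R$ of degree at most two, and match coefficients.

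For $p=q=1$ one has $W_{11}=2\FF(\Box)R$, so the two terms $-2R\FF(\Box)R+RW_{11}$ in \eqref{trace} cancel and the trace equation collapses to $3\Box W_{11}+\tfrac{1}{2}\Omega_{11}=\kappa(R-4\Lambda)$. The remaining pieces are evaluated with the help of the eigenrelation $\Box(R+\gamma/3)=\gamma(R+\gamma/3)$, a direct consequence of Lemma~\ref{lemma:matrix} for $p=1$. It gives
\[
\FF(\Box)R=\FF(\gamma)\,R+\frac{\gamma}{3}\bigl(\FF(\gamma)-f_0\bigr),\qquad \Box^l R=\gamma^l R+\frac{\gamma^{l+1}}{3}\ (l\geq 1),
\]
so that $W_{11}$, $\Box W_{11}$ and $K_{00}W_{11}$ all become explicit polynomials in $R$ whose coefficients involve only $\gamma$, $\FF(\gamma)$, and $f_0$.

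For the nonlocal pieces $\Omega_{11}$ and $\Omega_{11,00}$ I would use that $\dot{(\Box^l R)}=\gamma^l\dot R$ for every $l\geq 0$, which makes the spatial factor $\nabla^\lambda\Box^l R\,\nabla_\lambda\Box^{n-1-l}R=-\gamma^{n-1}\dot R^2=-\tfrac{4}{3}\gamma^{n+1}(R+\gamma)$ independent of $l$; the inner $l$-sum then produces a factor $n$, which after summation against $f_n$ yields precisely a $\FF'(\gamma)$ contribution. The companion factor $\Box^l R\,\Box^{n-l}R$ is handled by the same eigenrelation and collapses, after the double sum, into a polynomial of degree two in $R$ whose coefficients are built from $\FF(\gamma)$, $\FF'(\gamma)$, and $f_0$.

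At this point both scalar equations have been rewritten as polynomial identities in $R$ of degree at most two, and I would match coefficients of $R^2$, $R$, and $R^0$. The $R^2$ coefficient was already analyzed in Section~4 for the $p=q$ case and forces $\gamma\FF'(\gamma)=0$, hence $\FF'(\gamma)=0$ since $\gamma\neq 0$. Imposing this, the coefficient of $R$ in each equation collapses to a single condition that delivers $\gamma=-12\Lambda$, while the constant coefficient provides $f_0=\tfrac{3\kappa}{2\gamma}-8\FF(\gamma)$; the cross-consistency between the trace and the $00$ equations is automatic by the contracted Bianchi identity, and all steps are reversible, which yields the ``iff''. The main obstacle I anticipate is the bookkeeping inside the double sums $\sum_{n\geq 1}f_n\sum_{l=0}^{n-1}(\cdots)$ defining $\Omega_{11}$ and $\Omega_{11,00}$: isolating cleanly the $\FF(\gamma)$, $\FF'(\gamma)$, and $f_0$ pieces without sign or index errors. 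Once that is done, the three asserted conditions follow by elementary linear algebra.
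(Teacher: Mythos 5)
Your proposal follows essentially the same route as the paper: reduce both equations to quadratic polynomials in $R$ (via the eigenrelation $\Box(R+\gamma/3)=\gamma(R+\gamma/3)$, which is exactly what the paper's matrix formalism encodes for $p=1$), observe that the trace and $00$ equations are equivalent, and solve the resulting linear system in $f_0$, $\FF(\gamma)$, $\FF'(\gamma)$, whose consistency forces $\gamma=-12\Lambda$. One small bookkeeping slip: since $\Lambda$ enters only through the constant term $-4\kappa\Lambda$, it is the coefficient of $R$ that yields $f_0=\frac{3\kappa}{2\gamma}-8\FF(\gamma)$ and the constant coefficient that, after substitution, forces $\gamma=-12\Lambda$ (the reverse of what you state), but this does not affect the validity of the argument.
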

\begin{proof}
Trace and $00$ equations are written as
\begin{align}
  T_2 R^2 + T_1 R + T_0 &= 0, \label{trace:11}\\
  Z_2 R^2 + Z_1 R + Z_0 &= 0, \label{eom:00:11}
\end{align}

where

\begin{equation}
\begin{aligned}
  T_0 &= \frac{2}{9} \left(-5 \FF'(\gamma ) \gamma ^3+8 \FF(\gamma ) \gamma ^2+f_0 \gamma ^2+18 \kappa \Lambda \right), \\
  T_1 &= \frac{1}{3} \left(-3 \kappa +16 \gamma  \FF(\gamma )+2 \gamma  f_0\right), \\
  T_2 &= 2 \gamma  \FF'(\gamma ), \\
  Z_0 &= \frac{1}{36} \left(-26 \FF'(\gamma ) \gamma ^3-64 \FF(\gamma ) \gamma ^2-8 f_0 \gamma ^2+9 \kappa  \gamma -36 \kappa  \Lambda \right), \\
  Z_1 &= \frac{1}{12} \left(-12\FF'(\gamma ) \gamma ^2-16 \FF(\gamma ) \gamma -2 f_0 \gamma +3 \kappa \right), \\
  Z_2 &= -\frac{1}{2} \gamma  \FF'(\gamma ).
\end{aligned}
\end{equation}

At first, note that $T_0 + 4Z_0 = 4\gamma Z_1$, $T_1+4Z_1 = 8\gamma Z_2$, $T_2 + 4Z_2=0$ and hence equations \eqref{trace:11} and \eqref{eom:00:11} are equivalent. Therefore it is sufficient only to look at the trace. On the other hand from  equation \eqref{HR} we see that $R$ is a quadratic function in time and hence equation \eqref{trace:11} is satisfied for all values of time $t$ iff $T_0 = T_1 = T_2 =0$.  yields the linear system in $f_0$, $\FF(\gamma)$ and $\FF'(\gamma)$. It is consistent only for $\gamma = -12 \Lambda$ and the solution is
\begin{equation}
  \FF'(\gamma) = 0, \qquad f_0 = \frac {3\kappa}{2\gamma} - 8 \FF(\gamma).
\end{equation}

\end{proof}

\subsection{Case $(p,q)\neq (1,1)$}

As a consequence of lemma \ref{lemma:matrix} we can write
\begin{equation}
  R^p = e_p v_p, \quad \Box^n R^p = e_p M_p^n v_p, \quad \FF(\Box) R^p = e_p F_p v_p,
\end{equation}
where $e_p$ are the coordinates of $R^p$ in basis $v_p$ and $D_p$ be a matrix such that $\dfrac{\partial v_p}{\partial R} = D_p v_p$ and
$p\in \Natural$. Therefore the system \eqref{trace}, $\eqref{eom:00}$ can be written as
\begin{align}\label{trace:2}
&\frac {R - 4\Lambda}{16\pi G} = R W_{pq} -4 \gamma^2 (R+\gamma)W_{pq}'' -2 \gamma^2 W_{pq}'  -2 e_p v_p e_q F_q v_q  -S_1 + 2S_2, \\
&\frac{\Lambda -G_{00}}{16\pi G} = \frac 12  e_p v_p e_q F_q v_q + \frac \gamma4(\gamma-R) W_{pq} - \gamma(R+\gamma) W_{pq}' - \frac 12 (S_1+S_2),\label{eom:00:2}
\end{align}
where
\begin{align}
  W_{pq} &= e_p D_p v_p e_q F_q v_q + e_q D_q v_q e_p F_p v_p, \\
  S_1 &= \frac 43 \gamma^2 (R+\gamma) \sum_{n=1}^\infty f_n \sum_{l=0}^{n-1} e_p M_p^l D_p v_p e_q M_q^{n-1-l} D_q v_q, \\
  S_2 &= \sum_{n=1}^\infty f_n \sum_{l=0}^{n-1} e_p M_p^l v_p e_q M_q^{n-l} v_q.
\end{align}

\begin{theo} \label{theo:tz}
  Let
  \begin{align}
 T&= -2 e_p v_p e_q F_q v_q + R W_{pq} -4 \gamma^2 (R+\gamma)W_{pq}'' -2 \gamma^2 W_{pq}'\nonumber \\
 &-S_1 + 2S_2 -  \frac {R - 4\Lambda}{16\pi G}, \\
 Z&= \frac 12  e_p v_p e_q F_q v_q + \frac \gamma4(\gamma-R) W_{pq} - \gamma(R+\gamma) W_{pq}' \nonumber\\
  &- \frac 12 (S_1+S_2) + \frac {G_{00} - \Lambda}{16\pi G},
\end{align}
then $T+4Z = 4\gamma Z'$.
  The equations \eqref{trace:2} and  $\eqref{eom:00:2}$ are equivalent.
\end{theo}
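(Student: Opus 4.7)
The plan is to identify $T$ and $Z$ with the trace and the $00$-component of the covariantly conserved equation-of-motion tensor associated with the action \eqref{action}, and then to obtain the identity $T+4Z=4\gamma Z'$ directly from the Bianchi-type conservation law that any diffeomorphism-invariant variational system satisfies.

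First, let $E_{\mu\nu}$ denote the symmetric tensor formed by moving every term of \eqref{eom} to the left-hand side. Because \eqref{action} is a diffeomorphism-invariant scalar, the off-shell identity $\nabla^{\mu}E_{\mu\nu}=0$ holds automatically. For the flat FLRW background with time-only dependent fields, isotropy trivialises the spatial components of this identity, while the $\nu=0$ component collapses to the single scalar relation
\begin{equation*}
\dot E_{00}+H\bigl(g^{\mu\nu}E_{\mu\nu}+4E_{00}\bigr)=0.
\end{equation*}
Next, a direct comparison with the theorem's definitions, using $g^{\mu\nu}g_{\mu\nu}=4$, $g^{\mu\nu}K_{\mu\nu}=-3\Box$, $g^{\mu\nu}G_{\mu\nu}=-R$ and $g^{\mu\nu}\Omega_{pq\,\mu\nu}=\Omega_{pq}=-2S_1+4S_2$, together with the reduction \eqref{box} for $3\Box W_{pq}$ and the analogous $00$-component formulas, yields $T=g^{\mu\nu}E_{\mu\nu}$ and $Z=E_{00}$.

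Third, both $T$ and $Z$ are polynomials in $R$ by Lemma~\ref{lemma:matrix}, so the chain rule gives $\dot Z=\dot R\,Z'$. The kinematic formulas \eqref{HR} for the scale factor \eqref{scale_factor} yield $\dot R=\tfrac{2}{3}\gamma^{2}t$ and $H=-\tfrac{\gamma}{6}t$, hence $\dot R/H=-4\gamma$. Substituting into the Bianchi relation gives
\begin{equation*}
T+4Z=-\frac{\dot Z}{H}=-\frac{\dot R}{H}\,Z'=4\gamma Z',
\end{equation*}
as claimed. The equivalence of \eqref{trace:2} and \eqref{eom:00:2} then follows at once: $Z\equiv 0$ forces $Z'\equiv 0$ and hence $T\equiv 0$, while $T\equiv 0$ forces the first-order ODE $\gamma Z'=-Z$, whose only polynomial solution is $Z\equiv 0$ (any nonzero solution would be a constant multiple of $e^{-R/\gamma}$).

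The main technical obstacle is the identification $T=g^{\mu\nu}E_{\mu\nu}$, $Z=E_{00}$: it requires careful bookkeeping of signs and factors under the signature $(1,3)$ convention with $g_{00}=-1$, together with the explicit FLRW values of $R_{00}$, $K_{00}$, $\Omega_{pq\,00}$ and $\Omega_{pq}$. Once these identifications are secure, the remainder of the argument turns on the single kinematic relation $\dot R/H=-4\gamma$, and one sidesteps entirely the delicate term-by-term manipulation of the double sums defining $S_1$ and $S_2$ (including the reindexing $l\to n-l$ of the sort used in Lemma~\ref{lem:symetric}) that would be the main difficulty in a direct polynomial verification.
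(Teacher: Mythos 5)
Your argument is correct, but it follows a genuinely different route from the paper. The paper proves $T+4Z=4\gamma Z'$ by brute force: it rewrites the putative identity as a statement about the double sums $S_1$, $S_2$, passes to the common basis $v_{p+q}$, introduces the operator $\alpha(X)=\sum_{l=0}^{n-1}M_{p+q}^{l}XM_{p+q}^{n-l-1}$, and reduces everything to the single relation $M_{p+q}v_{p+q}-\gamma(R+\tfrac{\gamma}{3})D_{p+q}v_{p+q}+\tfrac43\gamma^2(R+\gamma)D_{p+q}^{2}v_{p+q}=0$, i.e.\ to the compatibility of the matrix representation of $\Box$ with \eqref{box}; the equivalence of \eqref{trace:2} and \eqref{eom:00:2} is then read off from the coefficient recursion \eqref{tz}. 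You instead recognise $T$ and $Z$ as $g^{\mu\nu}E_{\mu\nu}$ and $E_{00}$ for the equation-of-motion tensor, invoke the generalized Bianchi identity $\nabla^{\mu}E_{\mu\nu}=0$ guaranteed by diffeomorphism invariance, and convert its $\nu=0$ component $\dot E_{00}+H(g^{\mu\nu}E_{\mu\nu}+4E_{00})=0$ into the claimed polynomial identity via the single kinematic fact $\dot R/H=-4\gamma$. This is shorter, explains \emph{why} the identity holds, and sidesteps the reindexing gymnastics with $S_1$ and $S_2$ entirely; what it costs is that it proves the relation for the true variational derivative rather than for the explicit formulas on the page, so it is only as reliable as the claim that \eqref{eom} and its reduction to \eqref{trace:2}--\eqref{eom:00:2} reproduce that derivative exactly (you rightly flag this bookkeeping as the real work), whereas the paper's computation certifies the written expressions directly. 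Three small points: the relation is derived by dividing by $H$, which vanishes at $t=0$, so you should note that two polynomials in $R$ agreeing for all $t\neq 0$ agree identically; with $T\equiv 0$ the resulting ODE is $\gamma Z'=Z$ (solutions $Ce^{R/\gamma}$), not $\gamma Z'=-Z$ as you wrote, though the conclusion $Z\equiv 0$ for polynomials is unaffected; and the Bianchi identity should be applied order by order in $f_n$, which is legitimate since each monomial $R^{p}\Box^{n}R^{q}\sqrt{-g}$ is separately diffeomorphism invariant.
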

\begin{proof}
  To prove the first part of the theorem, direct calculations give
\begin{align}
  T+4Z &= \gamma W_{pq}- \gamma (R+3\gamma) W_{pq}' -4\gamma^2 (R+\gamma)W_{pq}'' -3S_1, \\
  4\gamma Z' &= 2\gamma \left(R^p \FF(\Box)R^q\right)' - \gamma W_{pq}- \gamma (R+3\gamma) W_{pq}' -4\gamma^2 (R+\gamma)W_{pq}'' \nonumber \\
  &-2\gamma\left(S_1 + S_2\right)'.
\end{align}
After some simplifications we are left with the following equation
\begin{align}\label{15.10.15:eq1}
  W_{pq} - \left(R^p \FF(\Box)R^q\right)' &= \frac 32 \gamma^{-1}S_1 - S_1' -S_2'.
\end{align}

Instead of expressing  $\Box^n R^p$ in basis $v_p$, and $\Box^n R^q$ in basis $v_q$ it is more convenient to express all the terms in basis $v_{p+q}$. Therefore let $\ve_p$ and $\ve_q$ be a coordinates of $R^p$ and $R^q$ respectively in basis $v_{p+q}$.
 Then the above equation becomes
\begin{align}
\sum_{n=1}^\infty f_n \varepsilon_p\left( M_{p+q}^n v_{p+q} \varepsilon_q- v_{p+q} \varepsilon_q M_{p+q}^n \right)D_{p+q} v_{p+q} = \sum_{n=1}^\infty f_n q_n
\end{align}

where
\begin{equation}
\begin{aligned}
  q_n = \sum_{l=0}^{n-1}\Big(& 2\gamma(R+\frac\gamma3) \varepsilon_p M_{p+q}^l D_{p+q} v_{p+q}\varepsilon_q M_{p+q}^{n-l-1} D_{p+q} v_{p+q} \\
  & -  \frac 43\gamma^2(R+\gamma) \varepsilon_p M_{p+q}^l D_{p+q}^2 v_{p+q}\varepsilon_q M_{p+q}^{n-l-1} D_{p+q} v_{p+q} \\
  & -  \frac 43\gamma^2(R+\gamma) \varepsilon_p M_{p+q}^l D_{p+q} v_{p+q}\varepsilon_q M_{p+q}^{n-l-1} D_{p+q}^2 v_{p+q} \\
  & -   \varepsilon_p M_{p+q}^l D_{p+q} v_{p+q}\varepsilon_q M_{p+q}^{n-l} v_{p+q} \\
  & -   \varepsilon_p M_{p+q}^l v_{p+q}\varepsilon_q M_{p+q}^{n-l} D_{p+q} v_{p+q} \Big).
\end{aligned}\end{equation}

It is sufficient to prove
\begin{equation}
  M_{p+q}^n v_{p+q} \varepsilon_q- v_{p+q} \varepsilon_q M_{p+q}^n  = q_n.
\end{equation}

Now, move the last term in $q_n$ to the left side and after some index relabeling one obtains
\begin{equation}
\begin{aligned}
  &\sum_{l=0}^{n-1} \varepsilon_p M_{p+q}^{l+1} v_{p+q}\varepsilon_q M_{p+q}^{n-l-1} D_{p+q} v_{p+q} \\
  &= \sum_{l=0}^{n-1}\Big(2\gamma(R+\frac\gamma3) \varepsilon_p M_{p+q}^l D_{p+q} v_{p+q}\varepsilon_q M_{p+q}^{n-l-1} D_{p+q} v_{p+q} \\
  & -  \frac 43\gamma^2(R+\gamma) \varepsilon_p M_{p+q}^l D_{p+q}^2 v_{p+q}\varepsilon_q M_{p+q}^{n-l-1} D_{p+q} v_{p+q} \\
  & -  \frac 43\gamma^2(R+\gamma) \varepsilon_p M_{p+q}^l D_{p+q} v_{p+q}\varepsilon_q M_{p+q}^{n-l-1} D_{p+q}^2 v_{p+q} \\
  & -   \varepsilon_p M_{p+q}^l D_{p+q} v_{p+q}\varepsilon_q M_{p+q}^{n-l} v_{p+q} \Big).
\end{aligned}\end{equation}

Let us introduce matrix function $\alpha$ by :
\begin{equation}
  \alpha(X) = \sum_{l=0}^{n-1} M_{p+q}^l X M_{p+q}^{n-l-1}.
\end{equation}
Then the previous equation becomes
\begin{equation}
\begin{aligned}
   \varepsilon_p M_{p+q} \alpha( v_{p+q}\varepsilon_q) D_{p+q} v_{p+q}&= 2\gamma(R+\frac\gamma3) \varepsilon_p \alpha( D_{p+q} v_{p+q}\varepsilon_q) D_{p+q} v_{p+q} \\
  & -  \frac 43\gamma^2(R+\gamma) \varepsilon_p \alpha( D_{p+q}^2 v_{p+q}\varepsilon_q) D_{p+q} v_{p+q} \\
  &-  \frac 43\gamma^2(R+\gamma) \varepsilon_p \alpha( D_{p+q} v_{p+q}\varepsilon_q ) D_{p+q}^2 v_{p+q} \\
  & -   \varepsilon_p \alpha( D_{p+q} v_{p+q}\varepsilon_q )M_{p+q} v_{p+q}.
\end{aligned}\end{equation}

Finally, the last equation is equivalent to
\begin{equation}
\begin{aligned}
   \varepsilon_p M_{p+q} \alpha( v_{p+q}\varepsilon_q) D_{p+q} v_{p+q}&= \gamma(R+\frac\gamma3) \varepsilon_p \alpha( D_{p+q} v_{p+q}\varepsilon_q) D_{p+q} v_{p+q} \\
  & -  \frac 43\gamma^2(R+\gamma) \varepsilon_p \alpha( D_{p+q}^2 v_{p+q}\varepsilon_q) D_{p+q} v_{p+q}.
\end{aligned}\end{equation}

\begin{equation}\begin{aligned}\label{15.10.15:eq2}
   \varepsilon_p \Big( M_{p+q} \alpha( v_{p+q}\varepsilon_q) - \gamma(R+\frac\gamma3) \alpha(D_{p+q} v_{p+q}\varepsilon_q)  \\ +\frac 43\gamma^2(R+\gamma) \alpha(D_{p+q}^2 v_{p+q}\varepsilon_q) \Big) D_{p+q} v_{p+q} = 0.
\end{aligned}\end{equation}

Recall equation \eqref{box}, which states
\begin{equation}
  \Box u  =  \gamma(R+\frac\gamma3) u' - \frac 43\gamma^2(R+\gamma) u''.
\end{equation}
Apply this equation to all elements of basis $v_{p+q}$ and get
\begin{equation}\begin{aligned}
  M_{p+q}v_{p+q} - \gamma(R+\frac\gamma3) v_{p+q}' + \frac 43\gamma^2(R+\gamma) v_{p+q}'' &= 0,\\
  M_{p+q}v_{p+q} - \gamma(R+\frac\gamma3) D_{p+q} v_{p+q} + \frac 43\gamma^2(R+\gamma) D_{p+q}^2 v_{p+q} &=0.
\end{aligned}\end{equation}
Multiplying by $M_{p+q}^l$ from the left and $\varepsilon_q M_{p+q}^{n-1-l}$ from the right and summing over $l$ from $0$ to $n-1$ we get
\begin{equation}\begin{aligned}
  &M_{p+q}\alpha(v_{p+q}\varepsilon_q) - \gamma(R+\frac\gamma3) \alpha(D_{p+q} v_{p+q}\varepsilon_q) \\
  &+ \frac 43\gamma^2(R+\gamma) \alpha(D_{p+q}^2 v_{p+q}\varepsilon_q) =0.
\end{aligned}\end{equation}
At the end, multiplying by $\varepsilon_p$ from the left and $D_{p+q}v_{p+q}$ from the right completes the proof of the first part of the theorem.

As we have seen previously, $T$ and $Z$ are polynomials in $R$ of degree $p+q$. Let their coefficients be $T_j$ and $Z_j$ ($0 \leq j\leq p+q$) respectively. What we have proved so far implies
  \begin{equation}\begin{aligned}\label{tz}
    T_{p+q} + 4Z_{p+q} & =0, \\
    T_j + 4Z_j &= 4\gamma (j+1)Z_{j+1}, (j\leq 0 < p+q).
  \end{aligned}\end{equation}

Moreover, the above equations imply  that the systems $T_{p+q} = T_{p+q-1} = \ldots = T_0 = 0$ and $Z_{p+q} = Z_{p+q-1} = \ldots = Z_0=0$ are equivalent, i.e. equations \eqref{trace:2} and  $\eqref{eom:00:2}$ are equivalent.
\end{proof}

It remains to solve equation \eqref{trace:2}. That is a very difficult task and it can be done only for particular values of parameters $p$ and $q$. The following theorem gives three cases. All other possibilities for $q \leq p \leq 4$ are presented in appendix.

\begin{theo} \label{theo:part}
The equation \eqref{trace:2} is  satisfied in the following cases ($\kappa = \frac 1{16\pi G}$):
  \begin{itemize}
%    \item[] \begin{tabular}{|c|cc|}
%            \hline
%            % after \\: \hline or \cline{col1-col2} \cline{col3-col4} ...
%             $p=2$, $q=1$& $f_0= -\frac{\kappa  (4 \gamma +15 \Lambda )}{7 \gamma ^3}$, &  \\
%            & $\FF(\gamma ) = \frac{9 \kappa  (\gamma +9 \Lambda )}{112 \gamma ^3}$, & $\FF'(\gamma )= -\frac{3 \kappa  (\gamma +9 \Lambda )}{8 \gamma ^4}$, \\
%            & $\FF(2 \gamma )= \frac{3 \kappa  (\gamma +9 \Lambda )}{56 \gamma^3}$, &   \\
%            \hline
%          \end{tabular}
%
%    \item[] \begin{tabular}{|c|cc|}
%            \hline
%            % after \\: \hline or \cline{col1-col2} \cline{col3-col4} ...
%             $p=2$, $q=2$& $f_0= \frac{\kappa  (145 \gamma +576 \Lambda )}{876 \gamma ^4}$, &  \\
%            & $\FF(\gamma ) = \frac{369 \kappa (\gamma +8 \Lambda )}{9344 \gamma ^4}$, & $\FF'(\gamma )= -\frac{639 \kappa  (\gamma +8 \Lambda )}{2336\gamma ^5}$, \\
%            & $\FF(2 \gamma )= \frac{27 \kappa  (\gamma +8 \Lambda )}{4672\gamma ^4} $, & $\FF'(2 \gamma )= -\frac{27 \kappa  (\gamma +8 \Lambda )}{9344 \gamma ^5}$  \\
%            \hline
%          \end{tabular}
    \item [$p=2$, $q=1$:] $\FF(\gamma ) = \frac{9 \kappa  (\gamma +9 \Lambda )}{112 \gamma ^3}$, $\FF(2 \gamma )= \frac{3 \kappa  (\gamma +9 \Lambda )}{56 \gamma^3}$, $f_0= -\frac{\kappa  (4 \gamma +15 \Lambda )}{7 \gamma ^3}$, $\FF'(\gamma )= -\frac{3 \kappa  (\gamma +9 \Lambda )}{8 \gamma ^4}$,
    \item [$p=2$, $q=2$:] $\FF(\gamma ) = \frac{369 \kappa (\gamma +8 \Lambda )}{9344 \gamma ^4}$, $\FF(2 \gamma )= \frac{27 \kappa  (\gamma +8 \Lambda )}{4672\gamma ^4} $, $f_0= \frac{\kappa  (145 \gamma +576 \Lambda )}{876 \gamma ^4}$, $\FF'(\gamma )= -\frac{639 \kappa  (\gamma +8 \Lambda )}{2336\gamma ^5}$, $\FF'(2 \gamma )= -\frac{27 \kappa  (\gamma +8 \Lambda )}{9344 \gamma ^5}$,
    \item [$p=3$, $q=1$:] $\FF(\gamma )= \frac{\kappa  (107 \gamma +408 \Lambda )}{6432 \gamma ^4}$, $\FF(2 \gamma )= -\frac{\kappa  (173 \gamma +840 \Lambda )}{7504 \gamma ^4}$ ,$\FF(3 \gamma )= 0$, $f_0= -\frac{\kappa  (95 \gamma +768 \Lambda )}{268 \gamma ^4}$, $\FF'(\gamma )= -\frac{9 \kappa (\gamma +8 \Lambda )}{88 \gamma ^5}$.
  \end{itemize}
\end{theo}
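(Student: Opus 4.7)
The plan is to reduce the problem to a finite linear-algebra verification. By Theorem \ref{theo:tz}, equations \eqref{trace:2} and \eqref{eom:00:2} are equivalent, so for each of the three cases $(p,q) \in \{(2,1),(2,2),(3,1)\}$ it suffices to check that the chosen values of $f_0$, $\FF(k\gamma)$ and $\FF'(k\gamma)$ make \eqref{trace:2} hold identically. From \eqref{HR}, $R(t)$ is a non-degenerate quadratic in $t$, so it sweeps out an open set of values as $t$ varies; therefore \eqref{trace:2} — which by the preceding setup is a polynomial in $R$ of degree $p+q$ — holds for all $t$ if and only if each of its $p+q+1$ coefficients $T_0,\dots,T_{p+q}$ vanishes. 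This is exactly the pattern used in the $p=q=1$ theorem above.

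Next, for each case I would turn every object in \eqref{trace:2} into an explicit polynomial in $R$ with coefficients that are rational in the finitely many parameters $\FF(0)=f_0,\FF(\gamma),\dots,\FF(p\gamma)$ and $\FF'(\gamma),\dots,\FF'((p-1)\gamma)$. Concretely: (i) build $M_p$ and $M_q$ from Lemma \ref{lemma:matrix} and form $F_p=\FF(M_p)$, $F_q=\FF(M_q)$ as small upper-triangular matrices whose entries depend only on $\FF(k\gamma)$; (ii) read off $\FF(\Box)R^p = e_p F_p v_p$ and $\FF(\Box)R^q = e_q F_q v_q$, and assemble $W_{pq}$ and $R^p\FF(\Box)R^q$; (iii) compute $\Box W_{pq}$ directly from the first-order differential rule in equation \eqref{box} applied to the polynomial $W_{pq}(R)$, avoiding any further matrix powers.

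The subtle step is the expansion of $S_1$ and $S_2$. Each contains a double sum $\sum_{n\ge 1} f_n \sum_{l=0}^{n-1}$; when the inner product $\Box^l R^p\,\Box^{n-l}R^q$ is written out using $M_p$ and $M_q$, the leading monomials produce coefficients of the form $\sum_n f_n\sum_{l=0}^{n-1}(k\gamma)^l (j\gamma)^{n-l}$. When $k=j$ this collapses to $\sum_n n f_n (k\gamma)^{n-1}\cdot(k\gamma)=k\gamma\,\FF'(k\gamma)$; when $k\ne j$ it collapses to a divided difference in $\FF$ evaluated at $k\gamma,j\gamma$, i.e.\ a combination of $\FF(k\gamma)$ and $\FF(j\gamma)$. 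The derivatives $\FF'(k\gamma)$ appearing in the theorem statement arise precisely from these confluent pieces, and the upper-triangular structure of $M_p$ plus the degree count show that no values $\FF^{(m)}(k\gamma)$ with $m\ge 2$ and no derivatives $\FF'(p\gamma)$ can occur, consistently with the statement.

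With every term expressed as a polynomial in $R$, the system $T_{p+q}=\cdots=T_0=0$ becomes a linear system in the unknowns $f_0$, $\FF(k\gamma)$, $\FF'(k\gamma)$, with $\Lambda$ entering only through $T_0$ (and through the linear-in-$R$ coefficient coming from the Einstein side). The coefficient $T_{p+q}$ is precisely the one analysed in the general case at the end of Section 4, whose vanishing is automatic (and equivalent to a relation between $\FF(p\gamma)$ and $\FF(q\gamma)$); $T_0$ then forces a specific relation between $\gamma$ and $\Lambda$. Solving the remaining equations one by one, from top to bottom in degree, determines the stated values of $\FF(\gamma),\FF(2\gamma),\dots,f_0$ and the corresponding $\FF'(k\gamma)$. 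Conceptually the proof is a routine verification; the real obstacle is the sheer volume of algebra for $(p,q)=(2,2)$ and $(3,1)$, where the polynomial has degree four and each coefficient is a moderately large rational combination, so in practice the symbolic expansion and the solution of the resulting linear system would be delegated to a computer algebra system and then cross-checked against the values listed in the theorem.
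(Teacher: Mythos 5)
Your overall strategy is the paper's: by Theorem \ref{theo:tz} it suffices to treat the trace equation, which is a polynomial identity of degree $p+q$ in $R$, so one sets all coefficients $T_j$ to zero and solves the resulting linear system in the finitely many values of $\FF$ and $\FF'$; the paper likewise reduces each case to such a system (displaying it explicitly only for $(2,1)$) and solves it by routine computation. However, two of your structural claims about that linear system are wrong and would lead you astray if followed literally. First, the derivative parameters are $\FF'(l\gamma)$ for $1\le l\le q$: they arise from the confluent pieces of your own divided-difference computation, i.e.\ from eigenvalues common to $M_p$ and $M_q$, whose nonzero common part is $\gamma,\dots,q\gamma$. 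They are not $\FF'(\gamma),\dots,\FF'((p-1)\gamma)$, and your assertion that no derivative $\FF'(p\gamma)$ can occur ``consistently with the statement'' is contradicted by the statement itself: the $p=q=2$ item contains $\FF'(2\gamma)=\FF'(p\gamma)$. With your parameter list the $(2,2)$ system would appear to have five equations in four unknowns, and the $(3,1)$ system would carry a spurious unknown $\FF'(2\gamma)$ that never actually enters.

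Second, your claim that $T_0$ ``forces a specific relation between $\gamma$ and $\Lambda$'' is false for all three cases of this theorem; that is the behaviour of the $p=q=1$ case only. Here one obtains exactly $p+q+1$ linear equations in exactly $p+q+1$ unknowns $f_0,\FF(\gamma),\dots,\FF(p\gamma),\FF'(\gamma),\dots,\FF'(q\gamma)$, and the system is nonsingular for arbitrary $\gamma$ and $\Lambda$ --- which is precisely why every value listed in the theorem depends on both $\gamma$ and $\Lambda$ as free parameters, and why the paper's conclusion stresses that no constraint on the cosmological constant arises when $(p,q)\neq(1,1)$. Likewise, the vanishing of $T_{p+q}$ is not ``automatic'': it is one of the equations of the system (e.g.\ $3\FF(2\gamma)-2\FF(\gamma)=0$ for $(2,1)$). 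A correct write-up needs only to identify the parameter set properly and record that the resulting square system has the unique solution stated.
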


\begin{proof}
  One can see that, for each of the values $p$ and $q$ listed in the theorem, all of the coefficients $T_j$  are linear combinations of the following $p+q+1$ "variables" $f_0 = \FF(0)$, $\FF(\gamma)$, $\ldots$, $\FF(p \gamma)$, $\FF'(\gamma)$, $\ldots$, $\FF'(q \gamma)$. Hence trace equation \eqref{trace} and $00$ equations are split into linear systems of $p+q+1$ equations with $p+q+1$ variables. Equation \eqref{tz} implies that these two systems are equivalent. Solving one of them (for example trace) in each of the three cases  gives the statement of the theorem.

In particular, for $p=2$ and $q=1$ the coefficients $T_j$ are given by:

\begin{equation}
\begin{aligned}
  T_0 &= \frac{4}{9} \left(-5 \FF'(\gamma ) \gamma ^4-32 \FF(\gamma) \gamma^3-19 \FF(2\gamma ) \gamma^3-3 f_0 \gamma ^3+9 \kappa \Lambda \right), \\
  T_1 &= \frac{1}{3} \left(10 \FF(\gamma ) \gamma^2 -55 \FF(2 \gamma ) \gamma ^2-9 f_0 \gamma ^2-3 \kappa \right),\\
  T_2 &= 2 \gamma  \left(10 \FF(\gamma ) - \FF(2 \gamma )+2 \gamma  \FF'(\gamma )\right),\\
  T_3 &= 3 \FF(2 \gamma )-2 \FF(\gamma ),\\
\end{aligned}
\end{equation}
The system that remains to be solved is
\begin{equation}
\begin{aligned}
-5 \FF'(\gamma ) \gamma ^4-32 \FF(\gamma) \gamma^3-19 \FF(2\gamma ) \gamma^3-3 f_0 \gamma ^3 &=-9 \kappa \Lambda, \\
10 \FF(\gamma ) \gamma^2 -55 \FF(2 \gamma ) \gamma ^2-9 f_0 \gamma ^2 &= 3 \kappa ,\\
10 \FF(\gamma ) - \FF(2 \gamma )+2 \gamma  \FF'(\gamma )&=0,\\
3 \FF(2 \gamma )-2 \FF(\gamma )&=0.
\end{aligned}
\end{equation}

The solution of this system is given in the theorem. The other cases are proved in the similar way.

\end{proof}

\section{Limit $M\to +\infty$}
Let the characteristic scale $M$ grow to infinity, the function $\FF\left(\frac\Box{M^2}\right)\to f_0$ and hemce the EOM \eqref{trace}, \eqref{eom:00} yield only conditions on $f_0$. In case $(p,q) \neq (1,1)$ there is no solution. The case $(p,q)=(1,1)$ provides the following equations 
\begin{equation}
  18\gamma f_0 = 3\kappa, \qquad 9\gamma^2 f_0 = -2 \kappa \Lambda.
\end{equation}
 
The solution of this system is $f_0 = \frac \kappa{6\gamma}$ if the cosmological constant $\Lambda$ is such that $\Lambda = - \frac \gamma{12}$.  Taking the further assumption $f_0 =0$ restores the General Relativity and we see that scale factor \eqref{scale_factor} is not a solution.
 
\section{Conclusion}

In this paper we have presented  cosmological bounce solution of the form $a(t) = a_0 \exp(-\frac\gamma{12}t^2)$. This solution is obtained in the modified gravity  model with nonlocal term $R^p \mathcal{F}(\Box) R^q$. In order to have a solution analytic function $\FF$ and its derivative $\FF'$ have to satisfy conditions of the form
\begin{equation}
  f_0= x_0,\quad \FF(k\gamma) = x_k\, (1 \leq k \leq p),\quad \FF'(l\gamma) = y_l\, (1 \leq l \leq q),
\end{equation}
for some constants $x_k$ and $y_l$. It is worth noting that  in all cases except $p=q=1$ the set of constants $x_k$ and $y_l$ is unique(for fixed values of $\gamma$ and $\Lambda$). Moreover, the case $p=q=1$ requires that constant $\gamma$ has special value ($\gamma = -12\Lambda$) which means that cosmological constant is required in order to have nontrivial solution. In the other cases there is no such restriction.

In the present paper we considered the model for particular values of the parameters $p$ and $q$. There is a possibility to extend some of these results to a general case. Matrices $M_p$ and $F_p$ can be defined for negative integer values of $p$, but it is much harder to do the computations since they are infinitely dimensional, and it is not clear if it will yield any solutions. For example, the model $p=-1$, $q=1$ was consider in \cite{BD1,DDGR3}.

\section{Appendix}

In a similar way as theorem \ref{theo:part} the following six cases are proved:

\begin{theo}
The equation \eqref{trace:2} is  satisfied in the following cases ($\kappa = \frac 1{16\pi G}$):
  \begin{itemize}
    \item [$p=3$, $q=2$:] $\FF(\gamma )= \frac{3 \kappa  (10702 \gamma +40497 \Lambda )}{245680 \gamma ^5}$, $\FF(2 \gamma )= -\frac{27 \kappa  (6 \gamma +25
   \Lambda )}{24568 \gamma ^5}$, $\FF(3 \gamma )= -\frac{27 \kappa  (6 \gamma +25 \Lambda )}{49136 \gamma ^5}$, $f_0= -\frac{3 \kappa  (7099
   \gamma +23949 \Lambda )}{15355 \gamma ^5}$, $\FF'(\gamma )= -\frac{3 \kappa  (11614 \gamma +68865 \Lambda )}{270248 \gamma ^6}$, $\FF'(2 \gamma
   )= \frac{513 \kappa  (6 \gamma +25 \Lambda )}{171976 \gamma ^6}$,
    \item [$p=3$, $q=3$:] $\FF(\gamma )= \frac{\kappa  (338597 \gamma +1847844 \Lambda )}{9513152 \gamma ^6}$,  $\FF(2 \gamma )= -\frac{21 \kappa  (379 \gamma +2076
   \Lambda )}{432416 \gamma ^6}$, $\FF(3 \gamma )= -\frac{9 \kappa  (379 \gamma +2076 \Lambda )}{864832 \gamma ^6}$, $f_0= \frac{9 \kappa  (77093
   \gamma +441108 \Lambda )}{1405352 \gamma ^6}$, $\FF'(\gamma )= -\frac{3 \kappa  (1462285 \gamma +8126148 \Lambda )}{13080584 \gamma ^7}$, $\FF'(2 \gamma )= \frac{255 \kappa  (379 \gamma +2076 \Lambda )}{1324274 \gamma ^7}$, $\FF'(3 \gamma )= \frac{3 \kappa  (379 \gamma +2076 \Lambda)}{432416 \gamma ^7}$,
    \item [$p=4$, $q=1$:] $\FF(\gamma )= \frac{3 \kappa  (1570 \gamma +11679 \Lambda )}{1800928 \gamma ^5}$, $\FF(2 \gamma )= -\frac{9 \kappa (50102 \gamma +262581 \Lambda )}{80141296 \gamma ^5}$, $\FF(3 \gamma )= -\frac{3 \kappa  (105430 \gamma +726207 \Lambda )}{49525520 \gamma ^5}$, $\FF(4 \gamma )= -\frac{3 \kappa  (1570 \gamma +11679 \Lambda )}{2251160 \gamma ^5}$, $f_0= -\frac{3 \kappa  (1111 \gamma +5361 \Lambda)}{56279 \gamma^5}$, $\FF'(\gamma )= -\frac{27 \kappa  (2 \gamma +15 \Lambda )}{2320 \gamma ^6}$,
    \item [$p=4$, $q=2$:] $\FF(\gamma )= \frac{27 \kappa  (116489 \gamma +976692 \Lambda )}{2128403200 \gamma ^6}$, $\FF(2 \gamma )= -\frac{27 \kappa  (4591 \gamma +19308 \Lambda )}{170272256 \gamma ^6}$, $\FF(3 \gamma )= \frac{9 \kappa  (2632969 \gamma +9126132 \Lambda )}{93649740800 \gamma ^6}$, $\FF(4\gamma )= 0$, $f_0= \frac{27 \kappa  (9773 \gamma +38204 \Lambda )}{6651260 \gamma ^6}$, $\FF'(\gamma )= -\frac{9 \kappa  (36889711 \gamma +230208108 \Lambda )}{22044176000 \gamma ^7}$, $\FF'(2 \gamma )= \frac{9 \kappa  (1257961 \gamma -26340492 \Lambda )}{108244505600 \gamma^7} $,
    \item [$p=4$, $q=3$:] $\FF(\gamma )= \frac{\kappa  (21007019473 \gamma +144494046423 \Lambda )}{2412863869280 \gamma ^7}$, $\FF(2 \gamma )= -\frac{3 \kappa (1426277827 \gamma +10110884265 \Lambda )}{1206431934640 \gamma ^7}$, $\FF(3 \gamma )= \frac{9 \kappa  (32585957 \gamma +237505338 \Lambda)}{4825727738560 \gamma ^7}$, $\FF(4 \gamma )= \frac{\kappa  (32585957 \gamma +237505338 \Lambda )}{1206431934640 \gamma ^7}$, \\
        $f_0= \frac{3\kappa  (3321165266 \gamma +25006112775 \Lambda )}{75401995915 \gamma ^7}$, $\FF'(\gamma )= -\frac{3 \kappa(43005362079625 \gamma +307070903674071 \Lambda )}{1924258935750800 \gamma ^8}$, $\FF'(2 \gamma )= \frac{\kappa  (15505640343740 \gamma +110842995690981 \Lambda)}{1503214190561440 \gamma ^8}$, $\FF'(3 \gamma )= -\frac{109 \kappa  (32585957 \gamma +237505338 \Lambda )}{26541502562080 \gamma^8}$,
    \item [$p=4$, $q=4$:] $\FF(\gamma )= \frac{3 \kappa  (37038228809 \gamma +146181469392 \Lambda )}{127137036761600 \gamma ^8}$, $\FF(2 \gamma )= -\frac{9 \kappa (238071667 \gamma +847503216 \Lambda )}{7803583635712 \gamma ^8}$, $\FF(3 \gamma )= \frac{537 \kappa  (765701 \gamma +2682288 \Lambda)}{9644878650880 \gamma ^8}$, $\FF(4 \gamma )= \frac{3 \kappa  (765701 \gamma +2682288 \Lambda )}{219201787520 \gamma ^8}$, $f_0= \frac{33 \kappa  (131820287 \gamma +420903432 \Lambda )}{343872804172 \gamma ^8}$, $\FF'(\gamma )= -\frac{81 \kappa  (261799491587 \gamma + 967343633136 \Lambda )}{4608717582608000 \gamma ^9}$, $\FF'(2 \gamma )= \frac{3 \kappa  (1392867522289 \gamma +4562593829712 \Lambda)}{6945189435783680 \gamma ^9}$, $\FF'(3 \gamma )= -\frac{4569 \kappa  (765701 \gamma +2682288 \Lambda )}{26523416289920 \gamma^9}$, $\FF'(4\gamma )= -\frac{9 \kappa  (765701 \gamma +2682288 \Lambda )}{876807150080 \gamma ^9}$.
  \end{itemize}
\end{theo}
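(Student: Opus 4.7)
The plan is, for each of the six $(p,q)$ pairs, to mirror the reduction used in the proof of Theorem \ref{theo:part}. First I would invoke Theorem \ref{theo:tz} to reduce the system \eqref{trace:2}--\eqref{eom:00:2} to the single scalar equation \eqref{trace:2}. Then, using Lemma \ref{lemma:matrix}, I would observe that the left-hand side of \eqref{trace:2} is affine in $R$ while each of $W_{pq}$, $\Box W_{pq}$, $e_p v_p e_q F_q v_q$, $S_1$ and $S_2$ is polynomial in $R$ of degree at most $p+q$, so \eqref{trace:2} becomes a polynomial identity in $R$. Since $R(t) = \frac{\gamma}{3}(\gamma t^2 - 3)$ ranges over infinitely many values, the identity holds iff each coefficient $T_j$ ($0 \le j \le p+q$) vanishes.

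Next I would unpack these coefficients in terms of the analytic function $\FF$. Because $M_p$ is upper triangular with diagonal entries $p\gamma, (p-1)\gamma, \ldots, \gamma, 0$, the matrix $F_p = \FF(M_p)$ depends on $\FF$ only through the values $\FF(0), \FF(\gamma), \ldots, \FF(p\gamma)$, together with first derivatives at the appropriate arguments arising from the strictly upper-triangular part. A careful bookkeeping shows that exactly the $p+q+1$ quantities $f_0, \FF(\gamma), \ldots, \FF(p\gamma), \FF'(\gamma), \ldots, \FF'(q\gamma)$ appear in the $T_j$, and they appear linearly. Consequently $T_0 = \cdots = T_{p+q} = 0$ is a square linear system of size $p+q+1$ whose coefficient matrix depends only on $\gamma$, and whose inhomogeneous term is linear in $\kappa$ and $\Lambda$.

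For each listed case I would then (i) write down $M_{p+q}$, the derivation matrix $D_{p+q}$, and the coordinate vectors $\varepsilon_p, \varepsilon_q$ of $R^p, R^q$ in the basis $v_{p+q}$; (ii) assemble $W_{pq}$, $S_1$ and $S_2$ as polynomials in $R$ with coefficients linear in the $p+q+1$ unknowns; (iii) collect the coefficients of $R^0, \ldots, R^{p+q}$; and (iv) invert the resulting linear system. The output of step (iv) is exactly the table of values for $f_0$, $\FF(k\gamma)$ and $\FF'(l\gamma)$ stated in the theorem. Structurally this is the same pipeline displayed in full for $(p,q) = (2,1)$ in the proof of Theorem \ref{theo:part}.

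The main obstacle is computational bulk rather than any conceptual novelty. For the pair $(p,q) = (4,4)$ one must solve a $9 \times 9$ linear system whose entries are polynomials in $\gamma$, and whose solution produces the rational functions of $\gamma$ and $\Lambda$ with six- and seven-digit numerators displayed in the theorem. Such symbolic manipulation is infeasible by hand, so I would carry it out in a computer algebra system, using the relations $T_j + 4 Z_j = 4\gamma(j+1) Z_{j+1}$ established in Theorem \ref{theo:tz} as an internal consistency check between the trace and $00$ equations.
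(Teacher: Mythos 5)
Your proposal is correct and follows essentially the same route as the paper: the paper's own proof of this appendix theorem is simply the remark that the six cases are handled ``in a similar way as Theorem~\ref{theo:part}'', whose proof is precisely your pipeline --- use Theorem~\ref{theo:tz} to keep only the trace equation, expand it as a polynomial of degree $p+q$ in $R$, note that the coefficients $T_j$ are linear in the $p+q+1$ quantities $f_0,\FF(\gamma),\dots,\FF(p\gamma),\FF'(\gamma),\dots,\FF'(q\gamma)$, and solve the resulting square linear system (by computer algebra for the larger cases). Your added consistency check via $T_j+4Z_j=4\gamma(j+1)Z_{j+1}$ is a sensible safeguard but not a departure from the paper's method.
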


\section*{Acknowledgement}
Work on this paper was supported by Ministry of Education, Science and Technological Development of the Republic of Serbia, grant No 174012.
I would like to thank Branko Dragovich, Alexey Koshelev, Zoran Rakic and Jelena Stankovic for useful discussions.
\section*{References}

\bibliography{etsquared}{}

\end{document}